
\documentclass[aps,pra,floatfix,amsmath,ams,showpacs,12pt]{revtex4}
\usepackage{graphicx}
\usepackage{amsthm}
\usepackage{dsfont}
\usepackage{ulem}
\usepackage{float}
\usepackage{amsfonts}
\restylefloat{figure}
\linespread{1.0}
\begin{document}

\title{Entanglement optimizing mixtures of two-qubit states}
\author{K. V. Shuddhodan}
\email[]{rs_shuddhodan@smail.iitm.ac.in}
\author{M. S. Ramkarthik} 
\email[]{ramkarthik@physics.iitm.ac.in}
\author{Arul Lakshminarayan}
\email[]{arul@physics.iitm.ac.in}
\affiliation{Department of Physics, Indian Institute of Technology Madras,
Chennai,
600036, India}

\newcommand{\newc}{\newcommand}
\newc{\beq}{\begin{equation}}
\newc{\eeq}{\end{equation}}
\newc{\kt}{\rangle}
\newc{\br}{\langle}
\newc{\beqa}{\begin{eqnarray}}
\newc{\eeqa}{\end{eqnarray}}
\newc{\pr}{\prime}
\newc{\longra}{\longrightarrow}
\newc{\ot}{\otimes}
\newc{\rarrow}{\rightarrow}
\newc{\h}{\hat}
\newc{\bom}{\boldmath}
\newc{\btd}{\bigtriangledown}
\newc{\al}{\alpha}
\newc{\be}{\beta}
\newc{\ld}{\lambda}
\newc{\sg}{\sigma}
\newc{\p}{\psi}
\newc{\eps}{\epsilon}
\newc{\om}{\omega}
\newc{\mb}{\mbox}
\newc{\tm}{\times}
\newc{\hu}{\hat{u}}
\newc{\hv}{\hat{v}}
\newc{\hk}{\hat{K}}
\newc{\ra}{\rightarrow}
\newc{\non}{\nonumber}
\newc{\ul}{\underline}
\newc{\hs}{\hspace}
\newc{\longla}{\longleftarrow}
\newc{\ts}{\textstyle}
\newc{\f}{\frac}
\newc{\df}{\dfrac}
\newc{\ovl}{\overline}
\newc{\bc}{\begin{center}}
\newc{\ec}{\end{center}}
\newc{\dg}{\dagger}
\newc{\md}{\mbox{mod}}
\newc{\prq}{\mbox{PR}_q}
\newc{\rt}{\sqrt{2}}
\newc{\pro}{|\p_1 \kt \br \p_1|}
\newc{\prt}{|\p_2 \kt \br \p_2|}
\newc{\calN}{{\cal N}}

\begin{abstract}
Entanglement in incoherent mixtures of pure
states of two qubits  is considered via the concurrence measure. A set of pure
states is optimal if the concurrence  for {\it any} mixture of them is the weighted sum of
the concurrences of the generating states. 
When two or three pure real states are mixed it is shown that  $28.5\%$ and
$5.12\%$ of the cases respectively, are optimal.
Conditions that are obeyed by the pure states generating such optimally
entangled mixtures are derived. For four or more pure states it is shown that there are {\it no}
such sets of real states. The implications of these on superposition of two or more dimerized
states is discussed.  A corollary of these results also show in how many cases rebit concurrence can be the 
same as that of qubit concurrence. 
\end{abstract}
\pacs{03.67.-a, 03.65.Bg, 03.67.Mn}

\maketitle


\section{Introduction}

Entanglement properties of pure and mixed quantum states have been the subject
of 
intense and extensive study in the recent past \cite{Horodecki09}. Of these,
entanglement
in qubits 
or spin-1/2 systems have dominated due to their use as fundamental objects in
quantum
computations.  For an arbitrary
state of two qubits the concurrence measure (or its square, called tangle) 
introduced by Hill and Wootters \cite{Wootters} is simply 
calculable from the density matrix and is a measure of entanglement. To be
precise
the entanglement of formation \cite{Nielsen} is a monotonic function of the
concurrence.
The concurrence measure
has been extensively applied in many physical contexts, for instance in the
study of
quantum
phase transitions \cite{Nature}.  In a collection of qubits,
concurrence measures the entanglement present within any chosen pair. Thus due
to
the monogamy property of entanglement \cite{monogamy} it is reasonable to expect
that
states with large 
multipartite entanglement have low or vanishing concurrence. In fact for a
random state
with more than six qubits the probability that a chosen pair has nonzero
concurrence
is vanishingly small, most of the entanglement is of the multipartite kind
\cite{Kendon02,ScottCaves}.

To elaborate on this property, consider a mixed state of two qubits
\beq
\label{first}
\rho= \sum_{i=1}^k p_i \, |\psi_i \kt \br \psi_i |, \;\; k > 1,
\eeq
where $p_i\ge0$,  $\sum_{i}p_i=1$ and the projectors are arbitrary, in
particular, 
$|\psi_i \kt \br \psi_i |$ need not be orthogonal.  
The convexity of concurrence \cite{Uhlmann00} implies that 
\beq
C(\rho) \le \sum_{i=1}^k p_i C(\psi_i)
\label{concconvex}
\eeq
where $C$ is the concurrence function, an entanglement monotone \cite{Wootters}.
Thus the
maximum 
that $C(\rho)$ can attain is the weighted sum of the concurrence of the extremal
(pure)
states.
If there exists a set $\{ |\psi_1 \kt, \cdots, \psi_k \kt \}$  such that
equality is
obtained in 
Eq.~(\ref{concconvex}) for any arbitrary set of weights $\{p_i\}$, it is
referred to
herein as optimal. However note that such a property will be specific to the
concurrence
measure of entanglement.

For states that are real in the standard basis, it is shown that a very large fraction
of states
made by incoherently
superposing 2 two-qubit states optimize their entanglement. This property is
analyzed in detail in this paper and conditions to be satisfied by the extremal
states
such that the resultant density
matrix is optimal are derived. 
 Any real density matrix can be tested for optimality of its diagonal
decomposition using
the inequalities derived.
These are also generalized beyond two states, and it is shown  that for more
than three
real states,
{\it not one} optimal decomposition exists. 

The relevance to superposed dimers
will be studied in section \ref{sec:dimers}.
The relation between entanglement and superpostion of quantum states is an
interesting one
\cite{Linden}. Indeed superposition of states with a tensor product structure is
necessary
for entanglement,
however of course this is not sufficient. There is a significant  amount of
literature
establishing bounds on various entanglement measures for the superposition in
terms of the
entanglement in the states that are being so superposed
\cite{Nisert,Linden,Chang,Heng,Gour,Song,Osterloh}.


\section{Rank-wise study of optimizing mixtures }
\label{sec:rankwise}

If  two pure and real states $|\p_1\kt$ and $|\p_2\kt$ are 
chosen at random, it is shown in this paper that in $28.5\%$ of cases the resultant entanglement in 
$p |\p_1\kt \br \p_1| + (1-p) |\p_2\kt \br \p_2|$ is the maximum
possible, namely equality holds in Eq.~(\ref{concconvex}). This implies that on superposing 2 two-qubit states, $28.5
\%$ of
the states
will remain entangled optimally, as defined in the introduction. The fraction $0.285$ of optimal pairs is interesting
and strong evidence that it is actually $(\pi-2)/4$ is presented in an Appendix.

The conditions under which such optimality occurs is now obtained, however this is done
in a more general setting in what follows.  In particular, 
extending these results to arbitrary mixtures of three real pure states, one
finds that in
about $5.1\%$ of cases this gives rise to optimally entangled states. It is also
then
shown that for four or more states there is not even {\it one} set of pure real
states,
such that all their mixtures are optimal. These generalizations are of
relevance when
more than 2 two-qubit states are superposed.

The reader is first reminded of the procedure to find
the concurrence in $\rho$, a given state of two qubits \cite{Wootters}.
The spin-flipped state $\tilde{\rho}= \sigma_y \otimes \sigma_y \rho^*
\sigma_y \otimes \sigma_y$ is found, where the complex conjugation is done in
the standard
basis.
Then the matrix $\rho \tilde{\rho}$ is diagonalized and has positive eigenvalues
$\mu_1
\ge \mu_2\ge \mu_3\ge \mu_4$. The concurrence $C(\rho)$ is 
$\mbox{max}(0,\sqrt{\mu_1}-
\sqrt{\mu_2}- \sqrt{\mu_3}- \sqrt{\mu_4})$. 

This somewhat involved definition of the concurrence renders it opaque for 
considerations of optimality. However it is possible to  express the concurrence 
of $\rho=\sum_{i=1}^k p_i |\psi_i \kt \br \psi_i |
$ more explicitly in terms of the the states $|\psi_i\kt$ and the weights $p_i$.
Restrict to the case $k \le 4$, that is the size of the generating set of states
is not larger than the maximum  rank of $\rho$. Note that if any $\rho$ is expressed in its eigenbasis this is not a restriction at all. It is now shown that the eigenvalues of  $\rho \tilde{\rho}$ are the same as that of $r'r'^*$ where 
\beq
\label{define-r}
r'_{ij}=\sqrt{p_ip_j}\, r_{ij},\; \mbox{and}\;r_{ij}=\br \psi_i |\sigma_y
\otimes
\sigma_y|\psi_j^* \kt.
\eeq
Thus rather than using the density matrix directly, the pure states comprising a particular
ensemble are used. Note that $|r_{11}|$ and $|r_{22}|$ are the concurrences, $C(\psi_1)$ and $C(\psi_2)$, of the pure states
$|\psi_1\kt$ and $|\psi_2\kt$ respectively.

For convenience the eigenvalue equation for $\tilde{\rho} \rho$ is considered,
whose
right eigenvectors can be written in the nonorthogonal, sub-normalized basis of
the
extremal states as $| \nu \kt = \sum_i \alpha_i |\psi_i' \kt$, where $ |\psi_i'
\kt=\sqrt{p_i}  |\psi_i \kt$. Also, writing
$\rho=\sum_{i=1}^{k} |\psi_{i}^{'} \kt \br \psi_{i}^{'} |$  and using the fact
that $ \br
\psi_{m}^{'}| \tilde{\rho} \rho | \nu \kt=\mu \br \psi_{m}^{'}| \nu \kt$
results in 
\beq
\label{array}
\sum_{i=1}^k(\sum_{j=1}^k \tilde{\rho'}_{mj} t_{ji}' -\mu t_{mi}') \alpha_i =0,
\;\; 1\le
m \le k,
\eeq
where 
\beq 
t'_{ji}=\sqrt{p_jp_i} \; t_{ji}=\sqrt{p_jp_i} \br \psi_j |\psi_i \kt 
\eeq
 is a matrix of inner products (the Gram matrix)
and 
\beq 
\tilde{\rho'}_{mj}=\sqrt{p_mp_j} \; \tilde{\rho}_{mj}=\sqrt{p_mp_j} \br \psi_m
|\tilde{\rho}|\psi_j\kt.
\eeq
Note also that since, $\rho^{*}= \sum p_{i} | \psi_{i}^{*} \kt \br \psi_{i}^{*} |$,
\beq
\tilde{\rho^{'}}_{mj}= \br \psi_{m}^{'} |\tilde{\rho}|\psi_{j}^{'}\kt =  \br
\psi_{m}^{'}
|\sigma_y \otimes \sigma_y(\sum p_{i} | \psi_{i}^{*} \kt \br
\psi_{i}^{*}|)\sigma_y
\otimes \sigma_y |\psi_{j}^{'}\kt
\eeq 
the following matrix identity is readily derived: $\tilde{\rho'}=r' r'^*= r'
r'^{\dagger}$, where $r$ and $r'$ are defined above in Eq.~(\ref{define-r}).

For the $k$ equations in Eq.(\ref{array}) to have non-trivial solutions
$\det(\Lambda)=0$,
where
\beq
\Lambda_{mi}=(\tilde{\rho^{'}}t')_{mi}-\mu (t')_{mi},
\eeq  
which further implies that
\beq
\det(t')\,\det(r'r'^*-\mu I)=0.
\eeq
As the number of vectors $|\psi_i\kt$ are no larger in number than the
dimensionality of
the Hilbert space, we assume them to be independent
and therefore $\det(t')=p_1\cdots p_k \det(t) \ne 0$. Thus 
\beq 
\det(r'r'^*-\mu I)=0.
\eeq
and hence the characteristic polynomials of $\rho \tilde{\rho}$ and $r'r'^*$ are identical.

 If the state $\rho$ is real in the computational basis then the eigenvalue
problem of $r'
r'^*$ is that of 
$r'^2$, whose eigenvalues are the square of the eigenvalues of $r'$, which are
indicated
as $\ld$. The expression for concurrence is derived and the conditions for
optimality are now considered case-by-case starting from $k=2$.

\subsection{Rank-2 density matrices: ${\mathbf k=2}$}
For the case of mixtures of two real pure states, $k=2$, the above considerations
 lead to the following characteristic equation of  $r'$ (as defined in Eq.~(\ref{define-r})):
\beq
\lambda^2-\xi \lambda + p_1 p_2 \chi_{12}=0
\eeq
where $\xi=p_1 r_{11}+p_2 r_{22}$, and $\chi_{12}=(r_{11} r_{22}-r_{12}^2)$. 
The eigenvalues are therefore 
\beq
\lambda_{\pm}=\frac{1}{2}(\xi \pm \sqrt{\xi^2-4 p_1p_2\chi_{12}})
\eeq

If $\chi_{12}>0$ and $\xi>0$ then $\lambda_{+}>\lambda_{-}>0$, and 
$C(\rho)=\lambda_{+}-\lambda_{-}$.  Alternatively if $\chi_{12}>0$ and $\xi<0$
then
$\lambda_{-}< \lambda_{+}<0$, and  $C(\rho)=|\lambda_{-}|-|\lambda_{+}|$. Thus
if
$\chi_{12}>0$, irrespective of the sign of $\xi$,
we have that 
\beq
C(\rho)=\sqrt{\xi^2-4p_1p_2\chi_{12}}<|\xi|<p_1C(\psi_1)+p_2C(\psi_2),
\eeq
confirming the convexity of concurrence. 

The case $\chi_{12}<0$ is more interesting and leads to optimality. If $\xi>0$
then
$\lambda_{-}<0<\lambda_{+}$ and $\lambda_{+}>|\lambda_{-}|$. It follows that
$C(\rho)=\lambda_{+}-|\lambda_{-}|=\lambda_{+}+\lambda_{-}=\xi$. Combining a
similar
analysis of the case $\xi<0$ one gets that when $\chi_{12}<0$, irrespective of
the sign of
$\xi$
\beq
\label{rebit}
C(\rho)=|\xi|=|p_1r_{11}+p_2 r_{22}|.
\eeq
This brings us to the possibility that if $r_{11}r_{22}>0$ then
$C(\rho)=p_1C(\psi_1)+p_2C(\psi_2).$ Thus when $|\psi_1\kt$ and $|\psi_2\kt$
satisfy the
conditions that $r_{11}r_{22}>0$ and $r_{11}r_{22}-r_{12}^2<0$ {\it any} arbitrary
mixture of
these pure states has the maximum entanglement which is their 
average entanglement. This is the first set of optimality conditions that we derive.

The set of such optimal states is a subset from {\it pairs} of real states. Each real state
of two qubits is characterized by $4$ real coefficients, say $x_i$, $i=1,\ldots,4$. The normalization 
condition means that there is a isomorphism between these and the 3-sphere $x_1^2+x_2^2+x_3^2+x_4^2=1$.
Apart from the fact that states differing by a sign are really the same (thus the space is a projective space)
the states maybe thought of as points in $S^3$. Thus a pair of real states is a point on the manifold $S^3 \times S^3$,
and the set of optimal states forms a subset therein whose fractional volume is of natural interest.
 
Assume that the real states of two qubits are distributed uniformly on $S^3$, namely choose the 
Haar measure.
Equivalently, the probability density of random real pure states \cite{Brody} of two
qubits is
given by 
\beq 
\label{uniforms3}
P(\{x_i\})=\dfrac{1}{\pi^2} \delta\left(\sum_{i=1}^4 x_i^2-1\right ),
\eeq where $x_i$ are the state
components in a generic basis  such as the computational one. 
The fraction of optimal states $f_2$ is then the
following integral on $S^3 \times S^3$, written in terms of the ambient space components $(x_i, y_i$) of $\mathbb{R}^4\times \mathbb{R}^4$:
\beq
\label{eqn:frac.two}
f_2= \int \Theta(r_{11}r_{22}) \Theta(r_{12}^2-r_{11}r_{22}) P(\{x_i\}) P(\{y_i\})
\prod_{i=1}^4 dx_i dy_i.
\eeq
Here $r_{11}= \br \psi_1 |\sigma_y \otimes \sigma_y |\psi_1 \kt =2(x_2x_3-x_1x_4)$, $r_{22}=\br \psi_2 |\sigma_y\otimes \sigma_y |\psi_2 \kt = 2(y_2y_3-y_1y_4)$,
$r_{12}=\br \psi_1 |\sigma_y\otimes \sigma_y |\psi_2 \kt =x_2y_3+y_2x_3-x_1y_4-x_4y_1$, and $\Theta$ is the Heaviside step function that is
$1$ if its
argument is positive and zero otherwise.
An exact evaluation of this integral seems possible and equal to $(\pi-2)/4$ (see the Appendix), however 
it is quite easy to simulate the process by choosing two independent vectors ${x_i}$ and ${y_i}$ distributed according to 
Eq.~(\ref{uniforms3}) and checking to see if the optimality condition is satisfied. The initial choice of vectors is done by 
simply taking $4$ numbers from any zero centered normally distributed set and normalizing them. This procedure gives the 
fraction $f_2$ to be approximately $0.285$, in good agreement with the value of  $(\pi-2)/4$.

\subsection{Rank-3 density matrices: ${\mathbf k=3}$}

Now we consider the general setting of mixing three real pure states,
$k=3$,
which leads to the cubic equation for
 the eigenvalues of $r'$, (defined in Eq.~(\ref{define-r})):
\beq
f(\ld)=\ld^{3}+\xi_{1}\ld^{2}+\xi_{2}\ld+\xi_{3}=0,
\eeq
where the coefficients $\xi_{i}$ are 
\beq
\xi_{1}=-\sum_{i} p_i r_{ii},\;
\xi_{2}=\sum_{i \ne j} p_i p_j (r_{ii}r_{jj}-r_{ij}^{2}),\;
\xi_{3}=-p_1p_2p_3 \, \det(r).
\eeq
Now we state two Lemmas which are key to understanding the nature of the roots
of cubic 
equations and the possibility of optimal states in the case $k=3$.

\newtheorem{lem}{Lemma }
\begin{lem}
 If  $p(x)$ is a cubic in $x$, $p(x)=x^{3}+ax^{2}+bx+c$ with real coefficients,
has real
roots, and is such that $a,b,c <0$ then $p(x)=0$ has two negative roots and one
positive
root 
with the positive root being greater than the other two in modulus.
\end{lem}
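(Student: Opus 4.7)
The plan is to read off the signs and magnitudes of the roots directly from Vieta's relations. Denote the three real roots by $r_1,r_2,r_3$. Vieta's formulas give
\begin{equation}
r_1+r_2+r_3=-a>0,\quad r_1r_2+r_1r_3+r_2r_3=b<0,\quad r_1r_2r_3=-c>0.
\end{equation}

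Next I would use the sign of the product $r_1r_2r_3=-c>0$. Since this is positive, either all three roots are positive, or exactly one is positive and the other two are negative (the case of one zero root is excluded because the product is strictly positive). To rule out the first possibility, I observe that if all $r_i>0$ then also the sum of pairwise products is strictly positive, which contradicts $b<0$. Hence exactly one root is positive and two are negative; relabel so that $r_1>0$ and $r_2,r_3<0$.

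Finally, to compare magnitudes, rewrite the sum relation as $r_1 = -a + |r_2|+|r_3|$, or more simply use the fact that the sum is positive: $r_1 - |r_2| - |r_3| = r_1+r_2+r_3 > 0$, so $r_1 > |r_2|+|r_3|$, which in particular is strictly larger than both $|r_2|$ and $|r_3|$ individually. This gives the stated modulus dominance of the positive root.

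I expect no real obstacle: the whole argument is a bookkeeping exercise with Vieta's formulas and sign analysis, and the only subtle point is remembering to exclude the all-positive-roots scenario by invoking the sign of $b$ rather than only that of $c$. The slightly stronger conclusion $r_1>|r_2|+|r_3|$ (not merely the maximum) falls out for free from the sum relation, and will presumably be useful in the subsequent application of the lemma.
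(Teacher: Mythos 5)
Your proof is correct. It follows the same overall skeleton as the paper's: the sign of the product of roots ($-c>0$) narrows the sign pattern to ``all positive'' or ``one positive, two negative,'' and the sign of the sum ($-a>0$) then forces the positive root to dominate in modulus. The one place you diverge is in ruling out the all-positive case: the paper argues via the derivative, noting that $b<0$ forces $p'(x)=3x^2+2ax+b$ to have one positive and one negative root, and then invokes interlacing (Rolle's theorem) to conclude that $p$ cannot have three positive roots. You instead use the second Vieta relation directly: three positive roots would make $r_1r_2+r_1r_3+r_2r_3=b$ positive, a contradiction. Your route is the more elementary and economical of the two, since it never leaves the symmetric functions of the roots; the paper's derivative argument proves the same thing but routes through calculus unnecessarily. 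You also observe, correctly, that the sum relation yields the slightly stronger conclusion $r_1>|r_2|+|r_3|$, which is in fact exactly what is needed downstream (the concurrence computation requires $\lambda_1-|\lambda_2|-|\lambda_3|>0$, not merely that $\lambda_1$ be the largest in modulus), so your strengthened statement is the more useful form of the lemma.
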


\begin{proof}
 Since $c<0$ the product of the roots is positive which implies that either all
roots are
positive or there are two negative and one positive root. However since $b<0$
the
quadratic $p'(x)$ has one positive and one negative root. Hence all the roots of
$p(x)$
cannot be positive since the roots of $p'(x)$ have to lie between the roots of
$p(x)$ by
Cauchy's mean value theorem for differentiable functions. Hence the polynomial
has two
negative and one positive root. Observe that $a<0$
implies the sum of the roots is positive which implies that the positive root
has the
largest modulus.
\end{proof}

\begin{lem}
 If $p(x)$ be a cubic in $x$, $p(x)=x^{3}+ax^{2}+bx+c$ with real coefficients
and has real
roots, and is such that $b<0$ and $a,c>0$ then the cubic has two positive roots
and one
negative root 
with the negative root being greater than the other two in modulus.
\end{lem}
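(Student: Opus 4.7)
The plan is to mimic the structure of the proof of Lemma 1, swapping the roles of positive and negative roots via appropriate sign changes. I would first extract information from Vieta's formulas: for $p(x) = x^3 + ax^2 + bx + c$, the product of the roots equals $-c$. Since $c > 0$, the product is negative, so either all three roots are negative or exactly one is negative and two are positive. The goal is to rule out the first possibility.

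To do so, I would turn to $p'(x) = 3x^2 + 2ax + b$. The product of its two roots is $b/3 < 0$, so $p'$ has one strictly positive and one strictly negative root. If all three roots of $p$ were negative, then by Cauchy's mean value theorem the two critical points of $p$ would lie between successive roots of $p$ and hence both be negative, contradicting the existence of a positive root of $p'$. Therefore $p$ must have exactly two positive roots and one negative root.

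Finally, I would use the sum of the roots, which equals $-a$. Since $a > 0$, the sum is negative; denoting the positive roots by $r_1, r_2 > 0$ and the negative root by $-s$ with $s > 0$, this gives $r_1 + r_2 - s < 0$, so $s > r_1 + r_2 > \max(r_1, r_2)$. Hence the negative root has the largest modulus, as claimed.

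I do not expect any genuine obstacle here: the argument is an essentially symmetric dual of Lemma 1 (with the sign pattern $(+,-,+)$ for $(a,b,c)$ instead of $(-,-,-)$), and each of the three steps, sign of product via $c$, contradiction from $p'$ having a positive root, and largest-modulus deduction from the sign of $a$, transfers over directly. The only point deserving mild care is the strictness of the inequality in the final step, but this follows because $r_1, r_2 > 0$ force $s > r_1$ and $s > r_2$ individually.
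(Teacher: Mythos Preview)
Your argument is correct and is precisely the sign-swapped dual of the proof of Lemma~1 that the paper intends; the paper itself merely states that the proof ``is on similar lines to the previous one.'' Each of your three steps (product of roots via $c$, ruling out three negative roots via the sign pattern of the roots of $p'$, and the modulus comparison via the sum of roots) matches that template exactly.
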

\begin{proof}
The proof of this lemma is on  similar lines to the previous one. \end{proof}

{\it Case 1.} Now suppose we have $\xi_{1},\xi_{3}<0$ then along with the
condition that
$\xi_{2}<0$ we have by Lemma 1 that the characteristic equation has two negative
and one
positive root
and that the positive root has the largest magnitude. Without loss of generality
let us
assume that  $\ld_{1}>0$ then we have the square-roots of the eigenvalues of
$\rho
\tilde{\rho}=(\sigma_y\otimes \sigma_y \, \rho)^2$ are
$\ld_{1},-\ld_{2},-\ld_{3}$. Hence
the concurrence of $\rho$ is
$C(\rho)=\mbox{max}(0,\ld_{1}+\ld_{2}+\ld_{3})=\mbox{max}(0,-\xi_{1})=|\xi_{1}
|$. 
Thus
\beq
C(\rho)=p_{1}r_{11}+p_{2}r_{22}+p_{3}r_{33}.
\eeq
If we have $r_{ii}>0$ for $i=1,2,3$ then clearly $C(\p_{i})=r_{ii}$.
Also if $\xi_{2}<0$ then since $p_{i}$'s can be arbitrary positive reals
bounded by 1 we
have each term $p_{i}p_{j}(r_{ii}r_{jj}-r^{2}_{ij})<0$ which means
$r_{ii}r_{jj}-r^{2}_{ij}<0,i
\neq j$ is a necessary condition.
Thus we have that if 
\beq 
r_{ii}>0, r_{ii}r_{jj}-r^{2}_{ij}<0, i\neq j
\eeq
 and 
\beq
\det(r)=\sum_{cyc
}r_{11}(r_{22}r_{33}-r_{13}^{2})-2(r_{11}r_{22}r_{33}-r_{12}r_{23}r_{31})>0
\eeq
 then any mixture of the triple $\{|\p_{1}\kt,|\p_{2}\kt,|\p_{3}\kt\}$ will be
optimally
entangled, that is 

\beq 
 C(\rho)=\sum_{i=1}^3 p_i C(\psi_i)
\eeq

{\it Case 2.} Similarly suppose we have $\xi_{1},\xi_{3}>0$ then along with the
condition
that $\xi_{2}<0$ we have by Lemma 2 that the characteristic equation has two
positive and
one negative root
and that the negative root has the largest magnitude. Let us assume that 
$\ld_{1}<0$ then
we have that the square-roots of the  eigenvalues of $\rho
\tilde{\rho}=(\sigma_y\otimes
\sigma_y \, \rho)^2$ are $-\ld_{1},\ld_{2},\ld_{3}$. Hence the concurrence of
$\rho$
$C(\rho)=
\mbox{max}(0,-(\ld_{1}+\ld_{2}+\ld_{3}))=\mbox{max}(0,\xi_{1})=\xi_{1}$, since
$\xi_{1}>0$. 
Thus we have if 
\beq 
r_{ii}<0, r_{ii}r_{jj}-r^{2}_{ij}<0, i\neq j
\eeq
 and 
\beq 
\det
(r)=\sum_{cyc}r_{11}(r_{22}r_{33}-r_{13}^{2})-2(r_{11}r_{22}r_{33}-r_{12}r_{23}
r_{31})<0
\eeq
 then again any mixture of the triple $\{|\p_{1}\kt,|\p_{2}\kt,|\p_{3}\kt\}$ is
optimally
entangled.
These conditions that are derived for optimality of entanglement for all
mixtures are both
necessary and
sufficient. They can be compactly stated as,
\beq r_{ii} \det(r)>0,\;\; [r]_{ii}<0, \; i=1,2,3,\eeq
where $[r]_{ii}$ are the principal minors of $r$.

 Once again we can therefore write the 
fraction of triples of pure real states of two qubits whose arbitrary mixtures
are
optimally entangled 
in terms of an integral such as in Eq.~(\ref{eqn:frac.two}), which when
evaluated using a procedure extending the previously described one, 
gives the fraction $f_3=0.0512$ or $5.12\%$. Parenthetically, while it is possible that this
integral in all its complexity evaluates exactly as well, the authors are
fairly confident that they cannot find it. 
It is interesting that indeed optimal
triples are
such that any two of them are also optimal. This has to be the case as the
vanishing of
any one
of the probabilities $p_i$ must still be optimal. Thus when we go from mixtures
of rank-2
to rank-3
we see a drastic drop in the percentage of states that lead to optimal
entanglement.
\subsection{ Rank-4 real density matrices are never optimal: ${\mathbf k \geq 4}$}

Incoherently  superposing four of more pure and real states leads to a qualitatively
 different behavior, as shown below.
 When $k=4$ the rank of the eigenvalue problem for $\rho { \tilde\rho}$ is full,
in the
sense that it is the dimensionality 
 of the Hilbert space. From our discussion above it is clear that we have a
quartic
polynomial 
 whose constant term is  $\det(r)$. For optimality we have to have the
concurrence
evaluating to
 the trace of $r'$, we need to have either one eigenvalue positive and three
negative
eigenvalues,
all of them smaller than the positive one in modulus, or one negative eigenvalue
higher in
modulus than
three positive eigenvalues. In either case this implies that $\det(r)<0$. 
However it is easy to prove that $\det(r)>0$. Indeed,
\beq
r_{ij}=\br \psi_i |\sigma_y \otimes \sigma_y |\psi_j\kt = \sum_{mn} F_{im}\br  m
|\sigma_y
\otimes \sigma_y|n \kt F^T_{nj}.
\eeq
 Here $F_{im}=\br \psi_i |m\kt$, and $|m\kt$ is any real orthogonal basis, for
instance it
could be the computational one. Therefore 
\beq
\det(r)=\det[\sigma_y \otimes \sigma_y ] \det(F)^2=\det(F)^2>0,
\eeq
the final inequality following from the reality of the transformation functions.
 Therefore unlike the rank-deficient cases the sign of $\det(r)$ is always
positive and
this rules out the existence of even one real quadruplet such that any arbitrary
mixture
of these remains optimally entangled.
This obviously implies the non-existence of even one set of real optimal state
for $k>4$. It is necessary to have complex states in the ensemble for optimizing the 
entanglement in this case. 

 Entanglement in real qubits have been studied earlier by
also restricting the Hilbert space to the space of reals, the so-called case of ``rebits"
\cite{RungtaRebits}. In this case minimization of the entanglement is also carried out only over the
real ensembles that are realizations of the
density matrix, unlike in the current paper, where we have used the usual
formula for concurrence.
The rebit formula for concurrence is $|\mbox{tr}(\sigma_y \otimes \sigma_y
\rho)|$ which in terms of the quantities introduced in this paper is $| \sum_i p_i \, r_{ii} | $. In the case 
when $k=2$ this is also the actual
concurrence if we only require that $ \chi_{12} = r_{11}r_{22}-r_{12}^2<0$, relaxing the condition that $r_{11} r_{22}>0$.
 This is true in about $78.5 \%$ ($\pi/4$ fraction) of cases when
$k=2$. Thus stated in terms of rebits, the present work also implies that for $78.5\%$
of pairs of real pure states that are mixed, the rebit entanglement coincides with the usual qubit entanglement. Similar generalizations to the case of triples of states gives about $48.5\%$,
which results from relaxing the conditions that all $r_{ii}$ have the same sign. 
However it follows from the above considerations that when we take a mixture of four (or more) real states, the resultant rebit entanglement is {\it always } suboptimal to the usual concurrence obtained on the full complex Hilbert space.

\section{Dimerized states and optimality}
\label{sec:dimers}

As an application of the study of  optimizing mixtures of two qubits, the 
problem of entanglement sharing in superpositions of  states with a dimerized structure 
is now taken up.
If there are many pure states of  $2N$ qubits such that qubits 1 and 2 are entangled only with each other, 3 and 4 with each other and so on, and each of these pairs are in pure states, then an implication of the previous section is that superposing such ``dimerized" states results in rather robust entanglement, especially if only two or three such states are superposed. On adding more such states, the entanglement in the pairs comes down due to the lack of optimality, and will lead to
more global, or multipartite entanglement. Consider the, in general unnormalized, state:
\beq
\label{dimer}
|\psi\kt = \sum_{i=1}^k a_i  \otimes_{j=1}^N |\phi_j^i \kt;\;\;\; \mbox{where} \;\;\;   |\phi_j^i \kt= 
\alpha_{j}^i \vert00\rangle_{j}+\beta_j^i\vert01\rangle_{j}+\gamma_j^i \vert10\rangle_{j}+\delta_{j}^i \vert11\rangle_j,
\eeq
and $\sum_i |a_i|^2=1$. Thus the state is a superposition of $k$ states, labelled by $i$,  each of which has $N$ pairs of entangled two qubit (normalized) pure states, labelled by $j$.  
No two pairs are entangled with each other.
Such superpositions arise in many context, for example in the Resonating Valence Bond states \cite{Anderson}. However note here that
the ``dimers" superposed are of the same kind,  that is, the entangled pairs of particles are the same. 
Most Hamiltonian systems have some form of  time-reversal (anti-unitary)
symmetry that renders
their eigenstates real.

Henceforth $|\phi_1^i \kt$, the state of the first entangled pair in the $i^{th}$ state is referred to
as $|\psi_i \kt $.
For simplicity consider the case when $k=2$.
Let $a_1= \cos(\theta)$ and $a_2=\sin(\theta)$.  
As the superposed states are not orthogonal,
there is the normalization 
factor $\calN$, for the state $|\psi\kt$ in Eq.~(\ref{dimer}) which is 
\beq
\label{mu1}
\calN=1/\sqrt{1+ \sin 2 \theta \mu_1},\;\;\;
\mbox{where}\;\;\;
\mu_1 =  \displaystyle\prod_{j=1}^N
(\alpha_j^1\alpha_j^2+\beta_j^1\beta_j^2+\gamma_j^1\gamma_j^2+\delta_j^1\delta_j^2).
\end{equation}

The reduced density matrix of any two qubits that are entangled in the original
states $|\psi_j^i\kt$, 
which without loss of generality can be taken as the first two qubits, is
\beq
\label{rdm12}
\rho_{12} 
= \calN^2 \left(\cos^2 \theta |\p_1 \kt \br \p_1| + \sin^2 \theta  |\p_2 \kt \br
\p_2|  + 
\mu_2 \sin \theta \cos \theta \, (\vert\p_1\rangle\langle\p_2\vert + \vert\p_2\rangle \langle \p_1 \vert ) \right).
\eeq
Here $\mu_2$ is defined  as 
\begin{equation}
\label{mu2}
\mu_2 = \displaystyle\prod_{j=2}^N
(\alpha_j^1\alpha_j^2+\beta_j^1\beta_j^2+\gamma_j^1\gamma_j^2+\delta_j^1\delta_j^2).
\end{equation}
In all generality this is all that can be said about $\rho_{12}$, however for most states
 the interference or coherence term is negligible,
due to the typical smallness of $\mu_1$ and $\mu_2$. Thus the approximation
$\rho_{12}\approx \rho$, where
\beq
\label{incoh}
\rho =  \cos^2 \theta |\p_1 \kt \br \p_1| + \sin^2 \theta  |\p_2 \kt \br \p_2|, 
\eeq
 is a good one. 

\begin{figure}
\includegraphics{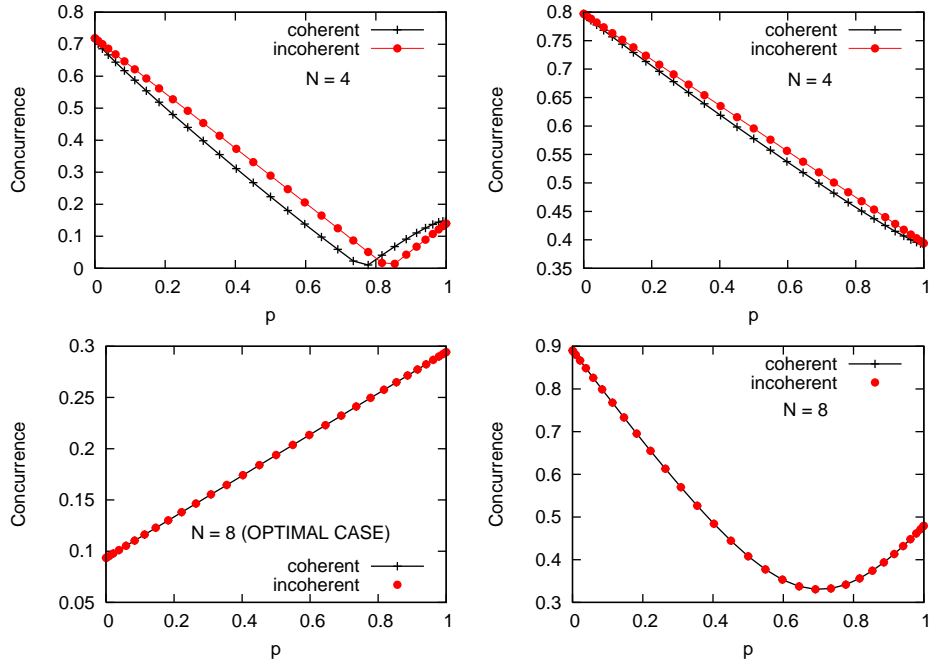}
\caption{Concurrence of the first two qubits {\it vs} $p_1=\cos^2 (\theta)$ for the superposition of
two states ($k=2$) with $N=4$  and $N=8$  (see Eq.~(\ref{dimer})) for two different realizations (left and right columns). The incoherent case is obtained on neglecting the interference terms in the superposition as in Eq.~(\ref{incoh}). The bottom left panel shows a case of optimality. }
\label{coherent}
\end{figure}

To estimate the typical value of $\mu_1$ and $\mu_2$ consider the Hilbert space
of each entangled pair consisting of 
two qubits and take for the distribution of the coefficients the uniform or Haar
measure of Eq.~(\ref{uniforms3}).
The averages of $\mu_1$ and $\mu_2$ are both zero in this ensemble. However the
second
moments are
 nonzero and can be shown to be \[ \br \mu_1^2 \kt = 4^{-N}, \;\;\; \br \mu_2^2
\kt
=4^{-(N-1)}.\]
This follows on observing that $\mu_1$ and $\mu_2$ in Eqs.~(\ref{mu1}, \ref{mu2}) are $N$ and $N-1$ fold products 
of inner-products of two four-vectors.  Consider one such (square of the) inner-product and its ensemble average,
the average being over the distribution where each of the 4-vectors, here denoted simply as $x_i$ and $y_i$ are
distributed according to the measure in Eq.~(\ref{uniforms3}):
\beq
 \left \langle \left( \sum_{i=1}^4 x_i y_i \right)^2 \right \rangle = \sum_{i=1}^4  \left \langle x_i^2 y_i^2 \right \rangle =1/4.
 \eeq
 The first equality follows as the odd powers average to zero, and the second equality follows as the 
 ensemble average of each of the $x_i^2$ is $1/4$, which follows most easily from normalization.
 
Thus $\mu_1$ and $\mu_2$ are typically of the order of $2^{-N}$ and
$2^{-(N-1)}$ respectively. In practice for most states with $N>8$ 
it is difficult to distinguish whether the incoherent two-qubit state $\rho$ obtained on
dropping
the interference term as in Eq.~(\ref{incoh}) is used or the actual reduced density matrix $\rho_{12}$  (Eq.~(\ref{rdm12})) is 
used. This is illustrated in Fig.~(\ref{coherent}), where random realizations of the two-qubit states $|\phi_j^i\kt$ are
used  in Eq.~(\ref{dimer}) with $k=2$. Two such realizations are selected for the cases of $N=4$ and $N=8$ and the concurrence
in the reduced density matrix of the first two qubits are calculated based on the exact state $\rho_{12}$ (Eq.~(\ref{rdm12}), referred 
to in the figure as ``coherent") and the approximation $\rho$ (Eq.~(\ref{incoh}), referred to in the figure as ``incoherent"). The concurrences
are plotted as a function of the mixing between the states that are superposed in Eq.~(\ref{dimer}). It is seen from the figure that when $N=8$ 
whatever difference persist between the entanglement in $\rho$ and $\rho_{12}$ is not visible, while it is for $N=4$. Also from the $N=8$ case,
the realization on the right, illustrates the convexity of concurrence as at $p=0,1$ two different two-qubit pure states are obtained while
for intermediate $p$, the density matrix is an incoherent superposition of these (see Eq.~(\ref{incoh})). However for $N=8$ the realization on the left
is peculiar in that the concurrence for the intermediate values of $p$ is just a linear interpolation of the concurrences of the pure states.
In this case one has hit upon what is studied in this paper as an optimal pair of pure states.

The approximate form of the reduced density matrix in Eq.(\ref{incoh}) 
obtained by using the incoherent superposition condition holds also for complex
states, however optimality results arise in the case of real state as 
studied in detail in the previous section. For $k>2$ similar considerations give rise to states of the form 
as in Eq.~(\ref{first}) where $p_i$ are equal to $|a_i|^2$.
 In terms of dimerized states, we see that superposing two of them leads to pairs of 
 qubits that were originally entangled with each other retaining much of it. About $28.5\%$ of pairs
 of qubits would have simply the weighted entanglements of the states before superposition.
On the other hand superposing three dimerized states leads to
a significant decrease in the fraction of robust dimers, which is now $5.12\%$, and superposing more than three,
the entanglement of a pair is bound to be smaller than the weighted entanglements.

\section{Discussions and Summary}

This work has used a particular measure of entanglement of two-qubit density matrices, namely concurrence, and studied questions of optimality, as defined herein, mainly restricting to the space of real states. Concurrence is only one measure of entanglement, but the entanglement of formation being a monotonic function of it renders it rather unique and as such this measure has been used in very many studies. What exactly the issue of optimality says about the geometry of the quantum space of states \cite{ZyckBook} is an interesting question that is not pursued here. It may also be interesting to study other measures of non-classical correlation, such as discord \cite{ZurekDiscord} from this perspective.

One possible application of optimality was discussed in the superposition of dimerized states. Such states are found in many quantum spin systems, such as the Majumdar Ghosh hamiltonian \cite{MajGhosh}, and superpositions are relevant in the  neighborhood of avoided crossings where it is known that interesting transformations of entanglement occur \cite{KarthikAudityaArul}.
Results not shown here indicate circumstances under which such intra-dimer entanglement can be broken if a dimerized state is superposed with a non-dimerized, completely 
random state. In particular this results in dimer density matrices being very 
close to Werner states \cite{Werner} and therefore results in the entanglement
of dimers vanishing when the random state component is more than $2/3$. In general the effect of superposition
on entanglement has been studied vigorously \cite{Linden}, as well as entanglement in the RVB states has
been explored \cite{RVB}. The results discussed in this paper may add in some small measure to 
the understanding of entanglement in such contexts. 

Real states are often obtained as eigenstates of time-reversal symmetric systems
and the discussion here will be of relevance to such systems, for instance to spin
chains that have time-reversal. 
Clearly when the states are complex the above approach to finding optimality
conditions
does not work. 
In the case when $k=2$ the optimality condition is found to be the same as that 
for real states, namely  $ 0<r_{11} r_{22}/r_{12}^2 <1 $ (except now $r_{12}$ is not
restricted to the reals). 
Thus it is conceivable that as this is the unit  interval in the complex plane,
the measure of optimal states is zero. However of course, there could be an infinity of
these, for
instance
all real states are possible candidates.

\begin{acknowledgments}
We thank Karol Zyczkowski, Steven Tomsovic, Suresh Govindarajan and V. Balakrishnan for discussions. This work was in part supported
by the DST project SR/S2/HEP-012/2009.
\end{acknowledgments}

\appendix*

\section{Evaluation of an integral for the optimal fraction $f_2$} 
\label{appa}

To recall, the integral is 
\beq
\label{theintegral}
f_2= \int \Theta(r_{11}r_{22}) \Theta(r_{12}^2-r_{11}r_{22}) P(\{x_i\}) P(\{y_i\})
\prod_{i=1}^4 dx_i dy_i,
\eeq
where $r_{ij}= \br \psi_i |\sigma_y \otimes \sigma_y |\psi_j \kt$ ,  $|\psi_i\kt$, $=1,2$ are two real and normalized 4-vectors,
and the measures $P$ are the uniform measures in Eq.~(\ref{uniforms3}). 
First, the particular Pauli matrix $\sigma_y$ that appears can be replaced by the other
Pauli matrices. In particular the $\sigma_z$ matrix being diagonal, offers a simpler look. This
replacement is quite easily seen to be equivalent to some $45^{\circ}$ rotations of the original variables.

Also dropping the constraint on the product $r_{11} r_{22}$ will be useful. If the resultant integral
is denoted as $f$, then it is shown below that $f_2$ is simply $f-1/2$.
Next, it is proven that as far as $f$ is concerned, the two 4-vectors can be taken to be orthogonal. Decompose 
say $|\psi_2 \kt$ along the vector $|\psi_1\kt$ and one orthogonal to it:
\beq
|\psi_2\kt = \cos(\theta) |\psi_1\kt + \sin(\theta) |\eta \kt,
\eeq
where $ \br \eta | \psi_1 \kt =0$. No additional phases are involved as the states are all real.
A straightforward calculation shows that 
\beq
r_{12}^2-r_{11}r_{22}=\sin^2(\theta) \left( \br \psi_1 |\sigma_z \otimes \sigma_z |\eta \kt^2 -
\br \eta |\sigma_z \otimes \sigma_z |\eta \kt \br \psi_1 |\sigma_z \otimes \sigma_z |\psi_1 \kt \right).
\eeq
The quantity within the brackets is precisely the same combination as in the L.H.S., except
that instead of $|\psi_i\kt$, the vectors are orthogonal. As $\sin^2(\theta)$ has a constant positive sign,
this proves that we can consider the pairs, to begin with, as being orthogonal. In other words the sign of the 
combination $r_{12}^2-r_{11}r_{22}$ is invariant under the Gram-Schmidt orthogonalization process.

The additional constraint of the vectors being orthonormal introduces an additional Dirac delta function term
in the measure. Writing the fraction $f$ as a ratio $f_n/f_d$, the numerator $f_n$ and the denominator $f_d$ 
are given by 
\beq
\label{numf}
f_n =\dfrac{8}{\pi^2} \int  \Theta \left( r_{12}^2 - r_{11} r_{22} \right) \delta \left( \sum _i x_i y_i \right) \delta \left( \sum _i x_i^2-1 \right) \delta \left( \sum _i  y_i^2-1 \right) dx \, dy, 
\eeq
and 
\beq
\label{denf}
f_d =  \dfrac{8}{\pi^2} \int \delta \left( \sum _i x_i y_i \right) \delta \left( \sum _i x_i^2-1 \right) \delta \left( \sum _i  y_i^2-1 \right) dx \, dy. 
\eeq
All the sums are from $1$ to $4$ and $dx\, dy$ is the Euclidean eight dimensional volume element. The factor $8/\pi^2$ is introduced for later convenience alone. To be further explicit the combination $r_{12}^2 - r_{11} r_{22}=$
\beq
\label{explicit}
\left( x_1 y_1-x_2y_2-x_3 y_3 + x_4 y_4\right)^2 - (x_1^2 -x_2^2-x_3^2 +x_4^2) (y_1^2 -y_2^2-y_3^2 +y_4^2).
\eeq
Introducing a series of transformation to various two-dimensional polar coordinates (in the ($x_1,x_4$) pair,
the $(x_2,x_3)$ pair, etc., as well as in the resulting radii) and performing two delta function integrals
corresponding to the normalizations results is:
\beq
\begin{split}
f_n=2 \int \Theta \left[ \left( \cos(\alpha) \cos(\beta) \cos(\theta) -\sin(\alpha) \sin(\beta) \cos(\phi)\right)^2 -\cos(2 \alpha) \cos(2 \beta) \right] \\ \delta \left[ \cos(\alpha) \cos(\beta) \cos(\theta) +\sin(\alpha) \sin(\beta) \cos(\phi)\right]
\sin(2 \alpha) \sin(2 \beta) d\alpha d \beta d\theta d \phi,
\end{split}
\eeq 
and a corresponding integral for $f_d$, only without the Heaviside theta function constraint.
Here $\alpha, \beta \in [0, \pi/2]$ while $\theta, \phi \in [0, 2 \pi]$. As a check, the integral can be easily done 
without either the Heaviside theta or the Dirac delta functions to give $8 \pi^2$, which is exactly the
factor that follows from the normalization of the two Dirac delta normalization constraints; see Eq.~(\ref{uniforms3}),
if one takes into account the factor $8/\pi^2$ that is introduced in Eq.~(\ref{numf}).

Introducing variables $v$ and $u$ as $\cos(\alpha) \cos(\beta) \cos(\theta) \pm \sin(\alpha) \sin(\beta) \cos(\phi)$
respectively, allows the delta function integration over the $v$ variable to be performed and results in:
\beq
f_n = \int \dfrac{\Theta \left( u^2 - \cos(2 \alpha ) \cos(2 \beta ) \right) \, du \, d\alpha \, d\beta }{ \sqrt{ 1-\dfrac{u^2}{4 \cos^2(\alpha) \cos^2(\beta)}} \sqrt{ 1-\dfrac{u^2}{4 \sin^2(\alpha) \sin^2(\beta)}}}.
\eeq
Notice that the given range of $\alpha$ and $\beta$, $[0, \pi/2]^2$ can be divided into four equal squares,
such that the $\Theta$ function constraint is effective only in $[0,\pi/4]^2$ and in $[\pi/4,\pi/2]^2$, as $\cos(2 \alpha) \cos(2 \beta)$ is negative elsewhere.  The range of the $u$ integration is restricted depending on $\alpha, \beta$.
Taking the range $[0,\pi/4]$, the contribution from it is denoted $f_n^{ll}$, $ll$ for lower-lower, we have:
\beq
f_n^{ll} = \int_{0}^{\pi/4} d \alpha \int_{\beta_0}^{\pi/4} d \beta \int_{u_0}^{u_1} \dfrac{du}{\sqrt{ 1-\dfrac{u^2}{4 c_{ \alpha}^2 c_{ \beta}^2}}\sqrt{ 1-\dfrac{u^2}{4 s_{ \alpha}^2 s_{ \beta}^2}}},\eeq
where
\[ \beta_0 = \sin^{-1} \sqrt{1/2-s_{\alpha}^2}, \; u_0=\sqrt{c_{2 \alpha} c_{2 \beta}}, \;  \mbox{and}\;  u_1= 2 s_{\alpha} s_{\beta}, \]
and $s_{\alpha}$ stands for $\sin(\alpha)$ etc.. The limits of the integration are such that the $\Theta$ function constraint is satisfied as well as the square-roots are real numbers. The denominator fraction can be similarly split up, and in fact $f_d^{ll}$ differs from the above in that $\beta_0=0$ as well as $u_0=0$. It is also not hard to see that $f_n^{uu}$ ($uu$ for upper-upper) is same as $f_n^{ll}$ and similarly for $f_d$. In the other two regions,
as the constraint is not operational, and because of symmetry, it follows that: $f_d^{ul}=f_d^{lu}=f_n^{ul}=f_n^{lu} \equiv f^{ul}$. It then follows that
\beq
f=\df{2 f^{ul}+2 f_n^{ll}}{2 f^{ul}+2 f_d^{ll}}.
\eeq
An evaluation of the three corresponding integrals is carried out numerically and results in 
$f_n^{ll}=0.116850...$, $f^{ul}=1/2$, $f_{d}^{ll}=0.285391...$. It is remarked that standard softwares could not evaluate the integrals symbolically, however the numerical results are sufficient to give the fraction $f=0.785398...$,
which is $\pi/4$ to 1 part in $10^{6}$. It is also easy to similarly see that $f_d^{ll}=(\pi-2)/4$ and from consistency$f_n^{ll}= (\pi^2-8)/16$.

Returning to the integral in Eq.~(\ref{theintegral}) for the fraction $f_2$ we see that there are two $\Theta$ constraints, while we have considered only one, namely the second one.  If the first constraint $\Theta(r_{11} r_{22})$ alone is present, it is easy to see that the integral is $1/2$. This also follows from the fact that $r_{11}$ and $r_{22}$ are both independent and uniformly distributed. We also state parenthetically without proof  that $r_{12}$ is distributed according to the semi-circle distribution. Now, if $r_{11} r_{22}$ is negative, then surely
$r_{12}^2-r_{11}r_{22}$ is positive, which is $50\%$ of the time. Thus the fraction of cases when $r_{11} r_{22}$ is 
positive {\it and} $r_{12}^2-r_{11}r_{22}$ is positive is $f-1/2$, which is precisely the required fraction $f_2$. Thus we have evaluated $f_2$ and presented sufficient evidence that it is actually $(\pi-2)/4$. It is not clear if it is only
a coincidence that this is also precisely $f_d^{ll}$.
The evaluation presented here may, by far, 
not be the ``optimal" one, but is the best the authors could come up with.


\begin{thebibliography}{100}
\bibitem{Horodecki09} R. Horodecki, P. Horodecki, M. Horodecki and K. Horodecki,
 Rev. Mod. Phys. 81, 865 (2009).
\bibitem{Wootters} S. Hill and W. K. Wootters, Phys. Rev. Lett. {\bf 78}, 5022
(1997); W.
K. Wootters, Phys. Rev. Lett. {\bf 80}, 2245 (1998).
\bibitem{Nielsen} M. A. Nielsen and I. L. Chuang, {\it Quantum Computation and
Quantum
 Information} (Cambridge Univ. Press, 2000 Cambridge).
\bibitem{Nature} A. Osterloh, Luigi Amico, G. Falci, and Rosario Fazio, Nature
{\bf 416},
608 (2002).
\bibitem{monogamy} D. Bru\ss{}, Phys. Rev. A {\bf 60}, 4344 (1999).
\bibitem{Kendon02} V. M. Kendon, Kae Nemoto, and W. J. Munro,
J. Mod. Opt. {\bf 49}, 1709 (2002). 
\bibitem{ScottCaves}  A. J. Scott and C. M. Caves, J. Phys. A: Math. Gen. {\bf
36}, 9553
(2003).
\bibitem{Uhlmann00} A. Uhlmann, Phys. Rev. A {\bf 62}, 032307 (2000).
\bibitem{Linden} Noah Linden, Sandu Popescu and John A. Smolin, Phys. Rev. Lett.
{\bf 97}, 100502 (2006).
\bibitem{Nisert} J. Nisert, N. J. Cerf, Phys. Rev. A {\bf 74}, 042328 (2007).
\bibitem{Chang} Chang-shui Yu, X. X. Yi, and He-shan Song, Phys. Rev. A {\bf
75}, 022332	
(2007).
\bibitem{Heng} Yong-Cheng Ou, Heng Fan, Phys. Rev. A {\bf 76}, 022320 (2007).
\bibitem{Gour} Gilad Gour, Aidan Roy, Phys. Rev. A {\bf 77}, 012336 (2008).
\bibitem{Song} Wei Song, Nai-Le Liu and Zeng-Bing Chen, Phys. Rev. A {\bf 76},
054303
(2007).
\bibitem{Osterloh} Andreas Osterloh, Jens Siewert and Armin Uhlmann, Phys. Rev.
A {\bf
77}, 032310 (2008).
\bibitem{Brody} T. A. Brody, J. Flores, J. B. French, P. A. Mello, A. Pandey and
S. S.
Wong, Rev. Mod. Phys. {\bf 53}, 385 (1981).
\bibitem{RungtaRebits} C. M. Caves, C. A. Fuchs and P. Rungta, Found. Phys.
Lett. {\bf 14}, 199 (2001).
\bibitem{Anderson} P. W.  Anderson, G. Baskaran, Z. Zou, and T. Hsu, Phys. Rev.
Lett. {\bf 58}, 2790 (1987).
\bibitem{ZyckBook} I. Bengtsson and K. Zyczkowski, {\it Geometry of Quantum States: An Introduction to Quantum Entanglement} (Cambridge University, Cambridge, England, 2007). 
 \bibitem{ZurekDiscord} H. Ollivier and W. H. Zurek, Phys. Rev. Lett. 88, 017901 (2001).
 \bibitem{MajGhosh} C. K. Majumdar and D. P. Ghosh, J. Math. Phys. 10, 1388 (1969);  J. Math. Phys. 10, 1399 (1969).
\bibitem{KarthikAudityaArul} J. Karthik, A. Sharma, and A. Lakshminarayan, Phys. Rev. A 75, 022304 (2007).
\bibitem{Werner}R. F. Werner, Phys. Rev. A {\bf 40}, 4277 (1989).
\bibitem{RVB} Anushya Chandran, Dagomir Kaszlikowski, Aditi Sen(De), Ujjwal Sen,
and Vlatko Vedral, Phys. Rev. Lett. {\bf  99}, 170502 (2007); Ravishankar
Ramanathan, Dagomir Kaszlikowski, Marcin Wiesniak, and Vlatko Vedral, Phys. Rev. B {\bf 78}, 224513
(2008).



\end{thebibliography}
\end{document}